\newtheorem{theorem}{Theorem}
\newtheorem{lemma}{Lemma}
\newtheorem{assumption}{Assumption}
\newtheorem{remark}{Remark}
\newcommand\rev[1]{#1}
\newcommand{\mus}{\mu_{\texttt{{S}}}}
\newcommand{\betap}{\beta_{\texttt{P}}}
\newcommand{\betau}{\beta_{\texttt{U}}}
\newcommand{\yint}{y^*_{\mathtt{INT}}}
\newcommand{\cp}{C_{\texttt{P}}}
\newcommand{\cu}{C_{\texttt{U}}}
\newcommand{\zsbar}{z_{\bar{\texttt{S}}}}
\newcommand{\zibar}{z_{\bar{\texttt{I}}}}
\newcommand{\zidag}{z^{\dagger}_{\bar{\texttt{I}}}}
\newcommand{\zsdag}{z^\dagger_{\bar{\mathtt{S}}}}
\newcommand{\yee}{y_{\texttt{EE}}}
\DeclareMathOperator*{\argmin}{arg\,min}
\newcommand{\Stb}{\texttt{S}}
\newcommand{\Itb}{\texttt{I}}
\newcommand{\Ut}{\texttt{U}}
\newcommand{\Pt}{\texttt{P}}
\newcommand{\ignore}[1]{}
\title{\LARGE \bf
Optimal Bayesian Persuasion for Containing SIS Epidemics
\author{Urmee Maitra$^{1}$, Ashish R. Hota$^{1}$, and Philip E. Par\'e$^{2}$
}
 \thanks{$^{1}$U. Maitra and A. R. Hota are with the Department of Electrical Engineering, IIT Kharagpur, Kharagpur, West Bengal, India, 721302. Email: {\tt\small urmeemaitra93@kgpian.iitkgp.ac.in, ahota@ee.iitkgp.ac.in.}}%
\thanks{$^{2}$Philip E. Par\'e is with the Elmore Family School of Electrical and Computer Engineering, Purdue University, USA. Email: {\tt\small philpare@purdue.edu.}}%
\thanks{Research of the authors was supported in part by the National Science Foundation, grant NSF-ECCS 2032258.}
 }
\begin{document}

\maketitle
\thispagestyle{empty}
\pagestyle{empty}

\begin{abstract}
We consider a susceptible-infected-susceptible (SIS) epidemic model in which a large group of individuals decide whether to adopt partially effective protection without being aware of their individual infection status. Each individual receives a signal which conveys noisy information about its infection state, and then decides its action to maximize its expected utility computed using its posterior probability of being infected conditioned on the received signal. We first derive the static signal which minimizes the infection level at the stationary Nash equilibrium under suitable assumptions. We then formulate an optimal control problem to determine the optimal dynamic signal that minimizes the aggregate infection level along the solution trajectory. We compare the performance of the dynamic signaling scheme with the optimal static signaling scheme, and illustrate the advantage of the former through numerical simulations. 
\end{abstract}

\section{Introduction}\label{section:introduction}
Effective containment of spreading processes or epidemics is a challenging problem. Prior works have developed optimal control techniques for containing epidemics via non-pharmaceutical interventions (such as wearing masks or social distancing) \cite{nowzari2016analysis,bliman2021optimal,kohler2021robust}, allocation of limited medical resources including vaccines and testing kits \cite{parino2023model,thul2023stochastic}, and limiting mobility among individuals \cite{niazi2021optimal}. However, implementing the optimal control actions in the above settings is not straightforward since the suggested actions often require a large number of individuals to adhere to them, many of whom may not even be aware of their true infection status. 

Past works have investigated the impacts of decentralized decision-making by a large group of selfish and strategic individuals on epidemic containment in the framework of game theory \cite{huang2022game,maitra2023sis,satapathi2022epidemic,ye2021game,9683594}. In most of these prior works, the agents were assumed to be aware of their true infection status. However, there are epidemics with asymptomatic infections and in some cases, similar symptoms are observed for multiple diseases; both the above characteristics were true for the recent COVID-19 pandemic. In addition, limited availability of testing kits made it difficult for individuals to learn about their true infection status in frequent intervals.

Consequently, individuals often relied on certain smart-phone applications \cite{lewis2021contact} to learn about their infection status. These applications collected information about several health indicators (such as pulse rate, body temperature, and sleep patterns) from the user, and whether the user was in proximity to other infected users to determine its risk of infection. However, the information conveyed by such applications is not free from error, and is potentially biased. Nevertheless, these applications provide a scalable manner in which individuals can be influenced to adopt protection measures and limit the spread of the epidemic. In this paper, we investigate how to design suitable signals to influence the behavior of strategic individuals towards the optimal containment of SIS epidemics in the framework of Bayesian persuasion \cite{kamenica2011bayesian,kamenica2019bayesian}.  

Although the Bayesian persuasion and information design frameworks have been applied in diverse settings, such as routing games \cite{ferguson2024information,zhu2022information}, resource allocation games \cite{paarporn2023strategically}, and modeling deception and privacy \cite{sayin2021bayesian}, there are only a few works that explore their application in the context of epidemics. Early attempts in this direction include \cite{pathak2022scalable} and \cite{chang2023controlling}, neither of which consider any specific compartmental epidemic model. In a recent work \cite{hota2023bayesian} by a subset of authors of this paper, the authors consider a game-theoretic setting where a large population of individuals adopt protection against the SIS epidemic with asymptomatic infections by relying on a noisy signal they receive from the sender. The authors characterize the stationary Nash equilibrium (SNE) of the game when the signaling scheme remains unchanged throughout the duration of the epidemic. 

In this work, we build upon \cite{hota2023bayesian} and investigate the {\it optimal} signaling scheme to minimize the proportion of infected individuals. First we derive the optimal static signal that minimizes the proportion of infected individuals at the SNE under suitable assumptions. We then formulate an optimal control problem which aims to minimize the integral of the time-varying infected proportion subject to constraints that include the SIS epidemic dynamics as well as an evolutionary learning dynamics that capture the evolution of the actions by the individuals. We numerically illustrate the trajectories of infected proportion obtained under the optimal static and dynamic signaling schemes. The results show that the trajectory of infected proportion is smaller under the dynamic (time-varying) signaling scheme compared to the optimal static signaling scheme, though the difference reduces with time as the trajectories converge to the corresponding SNE.

\section{SIS Epidemic under Bayesian Persuasion}
\label{sec:epi_model}
In this section, we briefly review the setting and main result from \cite{hota2023bayesian}. We consider a large group of agents, each of whom is either \emph{susceptible} $(\Stb)$ or \emph{infected} $(\Itb)$ at a given point of time, but is unaware of its exact infection state. \rev{The proportion of infected individuals at time $t$ is denoted by $y(t)$.} Each agent receives a noisy signal $x$ from the sender from the set $\{ \bar{\Stb}, \bar{\Itb}\}$. The signaling scheme is governed by two parameters $\mus, \mu_{\texttt{I}} \in [0, 1]$ with 
$$\mus := \mathbb{P}[x = \bar{\Stb} \mid \Stb], \quad \mu_{\Itb} := \mathbb{P}[x = \bar{\Itb} \mid \Itb].$$
The agents are aware of the signaling scheme, i.e., the values of $\mus$ and $\mu_{\Itb}$. After receiving the signal, each agent computes its posterior probability of being infected as
\begin{align*}
    \pi^+[\Itb | x] &= \frac{\mathbb{P}[x | \Itb] y(t)}{\mathbb{P}[x | \Itb] y(t) + \mathbb{P}[x | \Stb] (1-y(t))},
\end{align*}
with $\pi^+[\Stb | x] = 1-\pi^+[\Itb | x]$ and \rev{$y(t)$ acts as the prior.}\footnote{Though each individual agent is not aware of its personal infection status, it is reasonable to assume that the overall fraction of infected agents $y(t)$ is known. Information regarding $y(t)$ is often publicly available, e.g., in terms of test positivity rate among the tested population, which though negligible compared to the overall size of the population, is large in absolute numbers.}
 
Each individual chooses among two available actions: adopting protection and remaining unprotected, denoted by $a \in \{\Pt, \Ut\}$. Adopting protection comes with a cost $\cp > 0$ while an infected individual that remains unprotected incurs cost $\cu > 0$. An infected agent that adopts protection (or remains unprotected) spreads infection with probability $\beta_{\Pt}$ (respectively, $\beta_{\Ut}$), with $0 < \beta_{\Pt} < \beta_{\Ut} < 1$. A susceptible agent, upon adopting protection, reduces its likelihood of becoming infected by a factor $\alpha \in (0, 1)$. Further, $\gamma \in (0, 1)$ represents the recovery rate from the disease. Let $\zsbar(t)$ (respectively, $\zibar(t)$) denote the proportion of agents who remain unprotected among the agents that receive signal $\bar{\Stb}$ (respectively, $\bar{\Itb}$) at time $t$. Consequently, the proportion of infected agents evolves as
\begin{align}
    \dot{y} & = ((1 - y) [\betap + (\betau - \betap) (\zibar \mu_{\Itb} + \zsbar (1-\mu_{\Itb}))] \nonumber
    \\ & \qquad \times[\alpha + (1 - \alpha) (\zsbar \mus + \zibar (1 - \mus))] - \gamma) y \label{eq:sis_protection}
    \\ & =: ((1 - y) \beta_{\mathtt{eff}}(z_{\bar{\Stb}},z_{\bar{\Itb}};\mus) - \gamma) y, \nonumber
\end{align}
where the dependence on $t$ is omitted for better readability. 

\rev{Analogous to the classical SIS epidemic model \cite{mei2017dynamics}, the above dynamics has two equilibrium points: a \emph{disease-free equilibrium}, $y_{\mathtt{DFE}} = 0$ which always exist; and an \emph{endemic equilibrium}, $y_{\mathtt{EE}}(z_{\bar{\Stb}},z_{\bar{\Itb}};\mus) := 1 - \frac{\gamma}{\beta_{\mathtt{eff}}(z_{\bar{\Stb}},z_{\bar{\Itb}};\mus)}$ which exists only when $\beta_{\mathtt{eff}}(z_{\bar{\Stb}},z_{\bar{\Itb}};\mus) > \gamma$, and is strictly positive.} We now impose the following assumptions. 

\begin{assumption}\label{assumption:main}
The signal revealed to infected agents is truthful, i.e., $\mu_{\Itb} = 1$; $C_{\Pt} < C_{\Ut}$, which incentivizes infected agents to adopt protection; and \rev{the recovery rate satisfies $\gamma < \alpha \beta_{\Pt}$.} 
\end{assumption}

The assumption $\mu_{\Itb} = 1$ is motivated by the fact that during an actual pandemic, the signaling scheme will err on the side of being safe at the expense of a larger false alarm rate. \rev{Since $0 < \betap < \betau < 1, \alpha \in (0, 1)$, and $(\mus, \mu_{\mathtt{I}}, \zsbar, \zibar) \in [0, 1]^4$, it follows from \eqref{eq:sis_protection} that $\alpha \betap \leq \beta_{\mathtt{eff}}(z_{\bar{\Stb}},z_{\bar{\Itb}};\mus)$. Thus, the assumption $\gamma < \alpha \betap$ guarantees the existence of the endemic equilibrium $y_{\mathtt{EE}}$ with a nonzero infected proportion.}\footnote{\rev{Since the infected proportion is zero at the disease-free equilibrium, our main objective is to minimize the nonzero infected proportion at the endemic equilibrium.}}  

The expected utility (with respect to posterior $\pi^{+}$) received by an agent upon receiving signal $x \in \{\bar{\Stb}, \bar{\Itb}\}$ and choosing action $a \in \{\Pt, \Ut\}$ is denoted by $U[x, a]$, \rev{and the dependence of $U[x, a]$ on the tuple $(y,z_{\bar{\Stb}},z_{\bar{\Itb}})$ is suppressed for better readability.} Following \cite[Section III]{hota2023bayesian}, the difference in the expected utilities for each $x$ is given by
\begin{align}
    \Delta U[\bar{\Stb}] &= (1 - \alpha) L [\betap + (\betau - \betap) \zibar] y - \cp, \nonumber
    \\ \Delta U[\bar{\Itb}] &= \pi^+ [\Stb | \bar{\Itb}] ((1 - \alpha) L [\betap + (\betau - \betap) \zibar] y - \cu) \nonumber
    \\ &  \qquad + \cu - \cp,
    \label{eq:utility}
\end{align}
where $L > 0$ is the loss incurred by a susceptible agent upon being infected. \rev{A summary of important notations introduced above is provided in Table \ref{tab:notation}.}

\begin{table}[t]
\renewcommand{\arraystretch}{1.2}
    \centering
\caption{\rev{Summary of important notations.}}
\rev{\begin{tabularx}{0.48\textwidth} {|p{2cm}|p{5.7cm}|}
 \hline
 $\mathtt{S}, \mathtt{I}$ & Infection state: Susceptible ($\mathtt{S}$), Infected ($\mathtt{I}$) \\
 \hline
 $x \in \{\bar{\Stb}, \bar{\Itb}\}$ & Signal: Susceptible ($\bar{\Stb}$), Infected ($\bar{\Itb}$) \\
\hline
 $a \in \{\Pt, \Ut\}$ & Available action: adopt protection ($\Pt$) or remain unprotected ($\Ut$) \\
\hline
$\mus \in [0, 1]$ & Probability of an agent receiving signal $\bar{\Stb}$ when it is in state $\Stb$ \\
\hline
$\mu_{\mathtt{I}} \in [0, 1]$ & Probability of an agent receiving signal $\bar{\Itb}$ when it is in state $\Itb$ \\
\hline
$\pi^{+}[\mathtt{I} \mid x]$ & Posterior probability of being infected if the received signal is $x$ \\
\hline
$y(t) \in [0, 1]$ & Proportion of infected individuals at time $t$ \\
\hline
$\zsbar(t)$ ($\zibar(t)$) $\in [0, 1]$ & Proportion of agents that do not adopt protection among the agents that receive signal $\bar{\Stb}$ ($\bar{\Itb}$) \\
\hline
$\betap$ ($\betau$) $\in (0, 1)$ & Infection rate of an infected agent that adopts protection (remains unprotected) \\
\hline
$\alpha \in (0, 1)$ & Reduction of infection risk for a susceptible agent upon adopting protection  \\
\hline
$\gamma \in (0, 1)$ & Recovery rate \\
\hline
$\cp > 0$ & Cost of adopting protection \\
\hline
$\cu >0$ & Cost of remaining unprotected \\
\hline
$L > 0$ & Loss of a susceptible agent when infected\\
\hline
$U[x, a]$ & Expected utility of an agent receiving signal $x$ and choosing action $a$ \\
\hline
$\Delta U[x]$ & Difference in expected utility $U[x, \Pt] - U[x, \Ut]$ \\
\hline
$\beta_{\mathtt{eff}}(z_{\bar{\Stb}},z_{\bar{\Itb}};\mus)$ & Effective infection rate of the SIS dynamics \eqref{eq:sis_protection} \\
\hline
$y_{\mathtt{EE}}(z_{\bar{\Stb}},z_{\bar{\Itb}};\mus)$ & Infected proportion at the endemic equilibrium of \eqref{eq:sis_protection}  \\
\hline
\end{tabularx}}
\label{tab:notation}
\end{table}

\rev{Since the utility of an agent depends on the strategies chosen by all other agents $(z_{\bar{\Stb}},z_{\bar{\Itb}})$, we study their interaction within a game-theoretic framework.} A tuple $(y^\star,z^\star_{\bar{\Stb}},z^\star_{\bar{\Itb}})$ is said to be a stationary Nash equilibrium (SNE) when (i) $y^\star$ is the unique nonzero endemic equilibrium of \eqref{eq:sis_protection} with $\zibar = z^\star_{\bar{\Itb}}$ and $\zsbar = z^\star_{\bar{\Stb}}$, and (ii) the following mixed complementarity conditions are satisfied:
$$ \Delta U[\bar{\Stb}] \perp (0 \leq z^\star_{\bar{\Stb}} \leq 1), \quad \Delta U[\bar{\Itb}] \perp (0 \leq z^\star_{\bar{\Itb}} \leq 1).$$
A more elaborate discussion on the SNE is provided in \cite[Definition 1]{hota2023bayesian}. Before stating the main result on the characterization of SNE, we reproduce the following notation from \cite{hota2023bayesian}. We define
    \begin{align}
    & \yee(z_{\bar{\Stb}},z_{\bar{\Itb}};\mus) := 1 - \frac{\gamma}{\beta_{\mathtt{eff}}(z_{\bar{\Stb}},z_{\bar{\Itb}};\mus)},
    \label{eq:yee}
       \\ & h(\zibar, \mus) := (1 - \alpha) L (\betap + (\betau - \betap) \zibar) y_{\texttt{EE}}(1, \zibar; \mus) \nonumber 
        \\ & \qquad \qquad - \cp, \nonumber
        \\ & g(\zibar, \mus) := y_{\texttt{EE}}(1, \zibar; \mus) (\cu - \cp) \nonumber \\
        & \qquad -(1 - \mus) (1 - y_{\texttt{EE}}(1, \zibar; \mus)) (-h(\zibar, \mus)).
        \label{eq:h,g}
    \end{align}
    The quantity $\mus^{\max}$ is defined in \cite[Equation $(11)$]{hota2023bayesian} as:
\begin{align}\label{eq:mus_max}
   \mus^{\max} := \begin{cases}
        & 0, \quad \text{if} \quad g(0;0) \geq 0,
        \\ & \mus^{\star}, \quad \text{where } g(0,\mus^{\star}) = 0, \mus^{\star} \in (0,1). 
    \end{cases}
\end{align}
Finally, we define
\begin{align*}
    y^*_{\texttt{P}} & := 1 - \frac{\gamma}{\alpha \betap}, \quad y^*_{\texttt{INT}} :=  \frac{\cp}{L(1 - \alpha) \betap},
    \\ \zsbar^{\dagger} & := \frac{\gamma - \alpha \betap (1 - y^*_{\texttt{INT}})}{\betap (1 - \alpha) (1 - y^*_{\texttt{INT}}) \mus},
     \\ \mus^{\min} &:= 1 - \frac{(\betau - \gamma) (\cu - \cp)}{\gamma (\cp - (1 - \alpha) L (\betau - \gamma))}. 
\end{align*}
With the above notations in hand, the characterization of the SNE $(y^\star, \zsbar^\star, \zibar^\star)$ as established in \cite{hota2023bayesian} is stated below.

\begin{theorem}(\!\!\!\cite[Theorem $1$]{hota2023bayesian})
Under Assumption \ref{assumption:main}, we have the following characterization of the SNE: 
\begin{enumerate}
\item $(y^*_{\Pt},0,0)$ is the SNE if and only if $y^*_{\Pt} > \yint$.
\item $(\yint, \zsdag, 0)$ is the SNE if and only if $\zsdag \in (0,1)$ or equivalently,
    \begin{align*}
        & 1 - \frac{\gamma}{\alpha \betap} < \yint < 1 - \frac{\gamma}{\betap (\alpha + (1 - \alpha) \mu_{\Stb})}.
    \end{align*}
\item $(\yee(1,0;\mu_{\Stb}),1,0)$ is the SNE if and only if 
    \begin{align*}
        &  \mu_{\Stb} \in [\mu^{\max}_{\Stb},1], \quad 1 - \frac{\gamma}{\betap (\alpha + (1 - \alpha) \mu_{\Stb})} \leq \yint. 
    \end{align*} 
\item     $(\yee(1,\zidag;\mu_{\Stb}),1,\zidag)$ with $\zidag \in (0,1)$ being the unique value satisfying $g(\zidag,\mu_{\Stb})=0$ is the SNE if and only if $\mu_{\Stb} < \mu^{\max}_{\Stb}$, and either of the following two conditions are satisfied:
    \begin{align*}
        & L(1-\alpha)(\betau - \gamma) < C_{\Pt} \quad \text{with} \quad \mu_{\Stb} >\mu^{\min}_{\Stb}, \quad \text{or} 
        \\ & L(1-\alpha)(\betau - \gamma) > C_{\Pt}.
    \end{align*} 
\item $(1-\frac{\gamma}{\betau},1,1)$ is the SNE if and only if $L(1-\alpha)(\betau - \gamma) < C_{\Pt}$, and $\mu_{\Stb} < \mu^{\min}_{\Stb}$.  
\end{enumerate}
\label{lemma:sne_char}
\end{theorem}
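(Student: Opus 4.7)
The plan is to verify each of the five cases by checking two sets of conditions on the candidate tuple $(y^\star, \zsbar^\star, \zibar^\star)$: the endemic-equilibrium condition $y^\star = \yee(\zsbar^\star, \zibar^\star; \mus)$ coming from \eqref{eq:sis_protection}, together with the mixed complementarity conditions on $\Delta U[\bar{\Stb}]$ and $\Delta U[\bar{\Itb}]$ given in \eqref{eq:utility}. The assumption $\mu_{\Itb} = 1$ simplifies the posteriors considerably: an agent receiving $\bar{\Stb}$ infers with certainty that it is susceptible, so $\Delta U[\bar{\Stb}]$ depends on $(y, \zibar)$ only; and upon receiving $\bar{\Itb}$, the posterior of being susceptible equals $(1-\mus)(1-y)/[(1-\mus)(1-y) + y]$, which yields the expression for $\Delta U[\bar{\Itb}]$ in \eqref{eq:utility}.

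Before treating the cases, I would record two helper identities. First, $h(\zibar, \mus)$ equals $\Delta U[\bar{\Stb}]$ evaluated at $y = \yee(1, \zibar; \mus)$. Second, after clearing the positive denominator in $\pi^+[\Stb | \bar{\Itb}]$, one finds that $g(\zibar, \mus)$ has the same sign as $\Delta U[\bar{\Itb}]$ evaluated at the same $y$. Consequently, the complementarity sign conditions on $\Delta U$ translate directly into sign conditions on $h$ and $g$, which are much easier to manipulate.

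Cases 1 and 2 share $\zibar^\star = 0$; since $\pi^+[\Stb | \bar{\Itb}]$ is bounded in $[0,1]$ and $\cu > \cp$ by Assumption~\ref{assumption:main}, a short calculation shows that $\Delta U[\bar{\Itb}] > 0$ holds automatically, so only the $\bar{\Stb}$-side complementarity is active. In Case 1, $\zsbar^\star = 0$ requires $\Delta U[\bar{\Stb}] \geq 0$ at $y = y^*_{\Pt}$, which is precisely $y^*_{\Pt} \geq \yint$. In Case 2, $\zsdag \in (0,1)$ forces $\Delta U[\bar{\Stb}] = 0$, pinning $y^\star = \yint$; solving $\yint = \yee(\zsdag, 0; \mus)$ yields the stated formula for $\zsdag$, and the admissibility $\zsdag \in (0,1)$ translates into the displayed interval for $\yint$ by elementary algebra. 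Case 3, with $\zsbar^\star = 1$ and $\zibar^\star = 0$, yields two conditions: $\yee(1, 0; \mus) \leq \yint$ from the $\bar{\Stb}$-side, and $g(0, \mus) \geq 0$ from the $\bar{\Itb}$-side, the latter being equivalent to $\mus \geq \mus^{\max}$ by the definition \eqref{eq:mus_max}.

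Case 4 is the main obstacle. Here $\zidag \in (0,1)$ forces $\Delta U[\bar{\Itb}] = 0$, i.e., $g(\zidag, \mus) = 0$. Existence follows from the intermediate value theorem once we establish $g(0, \mus) < 0$ (which holds for $\mus < \mus^{\max}$ by the definition of $\mus^{\max}$) together with $g(1, \mus) > 0$. The sign of $g(1, \mus)$ splits according to that of $h(1, \mus) = (1-\alpha) L (\betau - \gamma) - \cp$: if $h(1, \mus) > 0$, then $g(1, \mus) > 0$ automatically (yielding the second sub-condition); if $h(1, \mus) < 0$, requiring $g(1, \mus) > 0$ reduces, after substituting $\yee(1, 1; \mus) = 1 - \gamma/\betau$ and rearranging, to $\mus > \mus^{\min}$. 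Uniqueness of $\zidag$ follows from monotonicity of $g(\cdot, \mus)$ in $\zibar$, which is inherited from monotonicity of $\yee(1, \cdot; \mus)$. One must additionally verify the $\zsbar^\star = 1$ complementarity $h(\zidag, \mus) \leq 0$; reconciling this with the zero-level of $g$ is the delicate bookkeeping step. Finally, Case 5 has $\zsbar^\star = \zibar^\star = 1$ and $y^\star = 1 - \gamma/\betau$, so the $\bar{\Stb}$-side complementarity reads $(1-\alpha) L (\betau - \gamma) \leq \cp$, and $g(1, \mus) \leq 0$ then becomes $\mus \leq \mus^{\min}$ by the same algebraic rearrangement used in Case 4.
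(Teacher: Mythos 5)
The first thing to note is that this paper does not prove the statement at all: it is quoted as \cite[Theorem 1]{hota2023bayesian} and used as imported background, so there is no in-paper proof to compare your attempt against. Judged on its own terms, your reconstruction takes the natural (and almost certainly intended) route: impose the endemic-equilibrium condition, translate the two complementarity conditions into sign conditions on $h$ and $g$ via the identity $\big[(1-\mus)(1-y)+y\big]\Delta U[\bar{\Itb}] = g$ evaluated at $y=\yee(1,\zibar;\mus)$, and read off the five cases. The case-by-case algebra you outline is consistent with the definitions of $\yint$, $\zsbar^{\dagger}$, $\mus^{\max}$, and $\mus^{\min}$ given in the paper, and the split of Case 4 according to the sign of $h(1,\mus)=(1-\alpha)L(\betau-\gamma)-\cp$ reproduces the stated sub-conditions correctly.

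A few loose ends worth tightening. (i) In Cases 1--2, the claim that $\Delta U[\bar{\Itb}]>0$ holds ``automatically'' does not follow from boundedness of the posterior and $\cu>\cp$ alone; it follows from the decomposition $\Delta U[\bar{\Itb}]=\pi^+[\Stb\mid\bar{\Itb}]\,\Delta U[\bar{\Stb}]+\big(1-\pi^+[\Stb\mid\bar{\Itb}]\big)(\cu-\cp)$ combined with $\Delta U[\bar{\Stb}]\geq 0$, which is what those two cases guarantee. (ii) The equivalence $g(0,\mus)\geq 0 \iff \mus\geq\mus^{\max}$ in Case 3 requires monotonicity of $g(0,\cdot)$ in $\mus$; you invoke it implicitly but never establish it (the paper itself sketches the relevant monotonicity of $g$ only later, inside the proof of Theorem \ref{theorem:oss}). (iii) The ``delicate bookkeeping'' you defer in Case 4 is actually a one-liner: $g(\zidag,\mus)=0$ together with $\yee(\cu-\cp)>0$ forces $(1-\mus)(1-\yee)(-h)>0$, hence $h(\zidag,\mus)<0$, which is exactly the $\zsbar=1$ condition. (iv) Case 1 is stated with strict inequality $y^*_{\Pt}>\yint$, whereas your argument gives $\geq$; the boundary point is where Cases 1 and 2 meet and should be assigned consistently. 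None of these is a fatal gap, but a complete proof would need all four filled in.
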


The above result characterizes the SNE for a given set of game parameters and a given value of $\mu_{\Stb}$. 


\section{Optimal Static Signal}
\label{sec:oss}

In this section, \rev{we consider the case where the principal or the sender chooses the signaling scheme in order to minimize the infected proportion at the SNE.} The optimal static signal $\mus$ that achieves the smallest proportion of infected individuals at the SNE is denoted by $\mu^\star_{\texttt{S(stat)}}$. Our analysis holds under the following assumption. 

\begin{assumption}\label{assump:cp_more}
The protection cost $\cp$ satisfies $\cp > (1 - \alpha) L (\betau - \gamma)$.
\end{assumption}

\begin{remark}
Note that Assumption \ref{assump:cp_more} implies
\begin{align*}
        \cp > & (1 - \alpha) L (\betau - \gamma) \implies \frac{\cp}{(1 - \alpha) L \betau} > 1 - \frac{\gamma}{\betau}
      \\ \implies & \underbrace{\frac{\cp}{(1 - \alpha) L \betap}}_{y^*_{\texttt{INT}}} > \frac{\betau}{\betap} - \frac{\gamma}{\betap} > \underbrace{1 - \frac{\gamma}{\alpha \betap}}_{y^*_{\texttt{P}}},
\end{align*}
where the last inequality follows since $\betau > \betap$ and $\alpha \in (0, 1)$. Consequently, the SNE $(y^*_{\texttt{P}},0,0)$ (Case $1$ of Theorem \ref{lemma:sne_char}) does not exist under Assumption \ref{assump:cp_more}. Now, observe that a necessary condition for existence of the SNE $(y^*_{\texttt{{INT}}}, \zsbar^\dagger, 0)$ is 
\begin{align*}
\frac{\cp}{(1 - \alpha) L \betap} & < 1 - \frac{\gamma}{\betap (\alpha + (1 - \alpha) \mus)}
\\ \implies \frac{\cp}{(1 - \alpha) L \betau} & < 1 - \frac{\gamma}{\betau},
\end{align*}
which is not possible under Assumption \ref{assump:cp_more}. Nevertheless, at these two SNEs, the proportions of infected agents are $y^*_{\Pt}$ and $\yint$, both of which do not depend upon $\mus$, and hence there is no notion of an optimal signal. More importantly, the existence of the remaining three SNEs, whose infected proportion depends on $\mus$ is not affected by Assumption \ref{assump:cp_more}. 
\label{remark:assump_cp_more}
\end{remark}
    
We start our derivation of the optimal static signal with the following lemma. 

\begin{lemma}\label{lemma:y_decreasing}
\rev{Suppose Assumptions \ref{assumption:main} and \ref{assump:cp_more} hold.} Let $\mus^{\max} > 0$ and $\mus^{\min}  < \mus^{\max}$. Then, the endemic infection level $y_{\texttt{EE}}(1, \zibar^\dagger; \mus)$ is strictly decreasing in $\mus$ over the range $\mus \in (\mus^{\min}, \mus^{\max})$.
\end{lemma}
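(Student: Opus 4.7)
The plan is to apply the implicit function theorem to $g(\zibar^\dagger(\mus),\mus)=0$ and combine it with the chain rule. Writing $Y(\zibar,\mus):=y_{\texttt{EE}}(1,\zibar;\mus)=1-\gamma/[\bar{A}(\zibar)\bar{B}(\zibar,\mus)]$ with $\bar{A}(\zibar):=\betap+(\betau-\betap)\zibar$ and $\bar{B}(\zibar,\mus):=\alpha+(1-\alpha)[\mus(1-\zibar)+\zibar]$, direct inspection shows that both $\partial Y/\partial\mus$ and $\partial Y/\partial\zibar$ are strictly positive on $(0,1)\times(0,1)$ since $\bar{A}\bar{B}$ increases in each argument. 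Hence $y^*(\mus)=Y(\zibar^\dagger(\mus),\mus)$ can decrease in $\mus$ only if $\zibar^\dagger(\mus)$ itself decreases sufficiently fast. Using $d\zibar^\dagger/d\mus=-(\partial g/\partial\mus)/(\partial g/\partial\zibar)$, the chain rule gives
\[\frac{dy^*}{d\mus}\;=\;\frac{(\partial Y/\partial\mus)(\partial g/\partial\zibar)-(\partial Y/\partial\zibar)(\partial g/\partial\mus)}{\partial g/\partial\zibar}.\]

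The first technical step is to expose a structural cancellation in the numerator. Writing $g=Y(\cu-\cp)+(1-\mus)(1-Y)(E-\cp)$ with $E:=(1-\alpha)L\bar{A}Y$, computing the partials, and using the SNE identity $(1-Y)(E-\cp)=-Y(\cu-\cp)/(1-\mus)$ to absorb one of the terms, I anticipate that careful bookkeeping yields
\begin{align*}
\frac{\partial g}{\partial\mus}&=\frac{\partial Y}{\partial\mus}\,\mathcal{B}+\frac{Y(\cu-\cp)}{1-\mus},\\
\frac{\partial g}{\partial\zibar}&=\frac{\partial Y}{\partial\zibar}\,\mathcal{B}+(1-\mus)(1-Y)(1-\alpha)L(\betau-\betap)Y,
\end{align*}
with the \emph{same} bracket $\mathcal{B}:=(\cu-\cp)+(1-\mus)[\cp+(1-\alpha)L\bar{A}(1-2Y)]$ in both. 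The $\mathcal{B}$-contributions cancel in the numerator, leaving
\[\frac{dy^*}{d\mus}=\frac{Y}{\partial g/\partial\zibar}\left[(1-\mus)(1-Y)(1-\alpha)L(\betau-\betap)\frac{\partial Y}{\partial\mus}-\frac{(\cu-\cp)}{1-\mus}\frac{\partial Y}{\partial\zibar}\right].\]

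What remains is the sign verification, which is the main obstacle. I first establish $\partial g/\partial\zibar(\zibar^\dagger,\mus)>0$ by combining uniqueness of $\zibar^\dagger\in(0,1)$ with the boundary checks $g(0,\mus)<0$ (from $\mus<\mus^{\max}$) and $g(1,\mus)>0$ (a short calculation using $\mus>\mus^{\min}$ and Assumption \ref{assump:cp_more}), so the root is a simple upcrossing. I then substitute the explicit forms $\partial Y/\partial\mus=\gamma(1-\alpha)(1-\zibar^\dagger)/(\bar{A}\bar{B}^{2})$ and $\partial Y/\partial\zibar=\gamma[(\betau-\betap)\bar{B}+\bar{A}(1-\alpha)(1-\mus)]/(\bar{A}\bar{B})^{2}$, use the indifference identity $Y(\cu-\cp)=(1-\mus)(1-Y)(\cp-E)$ to eliminate $\cu-\cp$, and invoke $\bar{A}\bar{B}-\gamma=\bar{A}\bar{B}Y$. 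After routine algebra the required negativity of the bracket reduces to
\[0\;<\;(\cp-E)(\betau-\betap)\bar{B}+(1-\alpha)(1-\mus)\bar{A}\bigl[\cp-(1-\alpha)L\betau Y\bigr].\]
The first summand is positive because $h(\zibar^\dagger,\mus)=E-\cp<0$ at the SNE of Case 4. For the second, the bounds $\bar{A}\le\betau$ and $\bar{B}\le 1$ yield $Y\le 1-\gamma/\betau$, hence $(1-\alpha)L\betau Y\le(1-\alpha)L(\betau-\gamma)<\cp$ by Assumption \ref{assump:cp_more}. Both summands are strictly positive, establishing $dy^*/d\mus<0$. The hardest part is precisely this last sign certification: without Assumption \ref{assump:cp_more} the second summand could a priori be negative with large magnitude, so the assumption is what makes the whole argument close.
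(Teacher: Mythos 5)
Your proposal is correct and, in substance, follows the same route as the paper: implicit differentiation of the equilibrium condition $g(\zidag(\mus),\mus)=0$ combined with the chain rule, with the sign of $\frac{d}{d\mus}\yee$ ultimately certified by exactly the same consequence of Assumption~\ref{assump:cp_more}, namely $(1-\alpha)L\betau\,\yee \le (1-\alpha)L(\betau-\gamma) < \cp$ (the paper packages this as $1-\zidag < (\cu-\cp)/L_{\texttt{eq}}(\mus)$; you package it as the positivity of your second summand). The bookkeeping differs: the paper first solves $g=0$ explicitly for $\zidag$ in terms of $w=\bar A\bar B$, differentiates, and isolates $\nabla_{\mus}(w)=\mathcal N/\mathcal D$ with $\mathcal D>0$, whereas your cancellation of the common bracket $\mathcal B$ in $\frac{\partial Y}{\partial\mus}\frac{\partial g}{\partial\zibar}-\frac{\partial Y}{\partial\zibar}\frac{\partial g}{\partial\mus}$ is arguably cleaner; I checked that your expressions for the partials, the bracket $\mathcal B$, and the reduction using $\bar A+(\betau-\betap)(1-\zibar)=\betau$ are all algebraically correct.

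The one genuine soft spot is your justification of $\frac{\partial g}{\partial\zibar}>0$ at the root. Uniqueness of the root in $(0,1)$ together with $g(0,\mus)<0<g(1,\mus)$ only yields $\frac{\partial g}{\partial\zibar}\ge 0$ (a unique sign change can occur with vanishing derivative, as for a cubic at an inflection point), and if the derivative vanished your implicit-function-theorem quotient would be undefined. The repair is already contained in your own computation: at the Case~4 SNE one has $h(\zidag,\mus)=E-\cp<0$, hence $\mathcal B=(\cu-\cp)+(1-\mus)\bigl[(\cp-E)+(1-\alpha)L\bar A(1-Y)\bigr]>0$, and therefore $\frac{\partial g}{\partial\zibar}=\frac{\partial Y}{\partial\zibar}\mathcal B+(1-\mus)(1-Y)(1-\alpha)L(\betau-\betap)Y>0$ directly. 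With that one-line substitution in place of the upcrossing argument, the proof is complete.
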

\begin{proof}
It follows from Theorem \ref{lemma:sne_char} that when $\mus^{\max} > 0$ and $\mus \in (\mus^{\min},\mus^{\max})$, the SNE is given by $(y_{\texttt{EE}}(1, \zibar^\dagger; \mus),1,\zibar^\dagger)$. In addition, $\cp > (1 - \alpha) L (\betau - \gamma)$ implies $\mus^{\min} < 1$. Further, it can be shown that  $\mus^{\max} \in (0, 1)$. The infected proportion at the SNE is given by
    \begin{align}
        & y_{\texttt{EE}}(1, \zibar^\dagger; \mus) = 1 \label{eq:y_ee} 
        \\ & - \frac{\gamma}{(\betap + (\betau - \betap) \zidag) (\alpha + (1 - \alpha) (\zidag + (1 - \zidag) \mus))}, \nonumber
    \end{align}
where $\zidag \in (0, 1)$ is the unique value satisfying $g(\zidag; \mus) = 0$. As $\mus$ varies in the range $(\mus^{\min}, \mus^{\max})$, $\zidag \in (0, 1)$ varies while preserving $g(\zidag, \mus) = 0$. In the rest of the proof, we denote $\zidag(\mus)$ to indicate that $\zidag$ is a function of $\mus$.

With a slight abuse of notation, we use $y_{\texttt{EE}}(\mus)$ to denote the endemic infection level $y_{\texttt{EE}}(1, \zibar^\dagger; \mus)$ at the SNE. It should be noted that $y_{\texttt{EE}}(\mus)$ is a function of only $\mus$, since $\zidag(\mus)$ itself is a function of $\mus$. We now introduce the following functions:
    \begin{align*}
        \beta_{\zidag}(\mus) &:= \betap + (\betau - \betap) \zidag(\mus),
        \\ \alpha_{\zidag}(\mus) &:= \alpha + (1 - \alpha) (\zidag(\mus) + (1 - \zidag(\mus)) \mus),
        \\ L_{\texttt{eq}}(\mus) &:= (1 - \alpha) L (\betau - \betap) (1 - \mus) (1 - \yee(\mus)),
        \\ w(\mus) &:= \beta_{\zidag}(\mus) \cdot \alpha_{\zidag}(\mus).
    \end{align*}
Note that when $\alpha \in (0, 1)$, $\mus \in (0, 1)$ and $0 < \betap < \betau < 1$, $w(\mus)$ is a non-zero, continuous and differentiable function of $\mus$. Therefore, $\yee(\mus) = 1 - \frac{\gamma}{w(\mus)}$ is also continuous and differentiable in $\mus$. We now introduce the operator $\nabla_{a}(b)$ to denote the derivative of $b$ w.r.t $a$, i.e., $\nabla_{a}(b) := \frac{\text{d}}{\text{d} a} (b)$. By applying the operator $\nabla_{\mus}(\cdot)$ on both sides of the above identity, we obtain 
$\nabla_{\mus}(\yee) = \frac{\gamma}{(w(\mus))^2} \nabla_{\mus}(w)$. Thus, when $w(\mus)$ is strictly decreasing in $\mus$, $\yee(\mus)$ is also strictly decreasing in $\mus$. 

We now derive expressions of $\zidag(\mus)$ and $\nabla_{\mus}(\zidag)$, respectively. By substituting $h(\zidag(\mus), \mus)$ in $g(\zidag(\mus), \mus)$, and equating $g(\zidag(\mus), \mus) = 0$, we obtain
\begin{align}
& (\cu - \cp) \yee(\mus) = (1 - \mus) (1 - \yee(\mus)) \nonumber
    \\ & \times \big[-\!(1 \!-\! \alpha) L (\betap + (\betau \!-\! \betap) \zidag(\mus)) \yee(\mus) + \cp \big] \nonumber
\\ \Rightarrow ~ & L_{\texttt{eq}}(\mus) \zidag(\mus) \yee(\mus) = (1 - \mus) (1 - \yee(\mus)) \cp \nonumber
\\ & - (1 - \alpha) L (1 - \mus) (1 - \yee(\mus)) \betap \yee(\mus) \nonumber
\\ & - \yee(\mus) (\cu - \cp), \nonumber
\\ \Rightarrow ~ & \zidag(\mus) = \frac{\cp w(\mus)}{L (w(\mus)-\gamma) (1 \!-\! \alpha) (\betau - \betap)} -\frac{\cu \!-\! \cp}{L_{\texttt{eq}}(\mus)} \nonumber
\\ & \qquad - \frac{\betap}{\betau - \betap}, \label{eq:zi_new}
\end{align}
where the last implication is obtained by substituting $\yee(\mus) = 1 - \frac{\gamma}{w(\mus)}$ in the preceding equation. The R.H.S. of the above equation is well-defined, and continuous in $\mus$ over the specified range. We now compute $\nabla_{\mus}(\zidag)$ as
\begin{align}
& \frac{\cp}{L (1 - \alpha) (\betau - \betap)} \nabla_{\mus} \Big(\frac{w(\mus)}{w(\mus) - \gamma}\Big) \nonumber
    \\ & -\frac{\cu - \cp}{(1 - \alpha) L (\betau - \betap)} \nabla_{\mus} \Big((1 - \mus)^{-1} (1 - \yee(\mus))^{-1}\Big) \nonumber
    \\ = & \frac{-\gamma \cp \nabla_{\mus} (w(\mus))}{L (1 - \alpha) (\betau - \betap) (w(\mus) - \gamma)^2} \nonumber \\ &- \frac{\cu - \cp}{(1 - \alpha) L (\betau - \betap)} \Big(\frac{(1 - \mus)^{-1}}{(1 - \yee(\mus))^2} \nabla_{\mus} (\yee(\mus)) \nonumber
    \\ &\qquad \quad +\frac{(1 - \yee(\mus))^{-1}}{(1 - \mus)^2}\Big), \nonumber
    \\  = &  -\gamma \nabla_{\mus}(w(\mus)) \Big(\frac{\cp}{L(w(\mus) - \gamma)^2 (1 - \alpha) (\betau - \betap)} \nonumber
        \\ &+ \frac{\cu - \cp}{ \gamma w(\mus) L_{\texttt{eq}}(\mus)}\Big)  - \frac{\cu - \cp}{(1 - \mus)L_{\texttt{eq}}(\mus)}, \label{eq:del_zi_new}
\end{align}
where, the final equation is obtained by substituting $\yee(\mus) = 1 - \frac{\gamma}{w(\mus)}$, $\nabla_{\mus}(\yee) = \frac{\gamma}{(w(\mus))^2} \nabla_{\mus}(w)$, and $L_{\texttt{eq}}(\mus) = (1 - \alpha) L (\betau - \betap) (1 - \mus) (1 - \yee(\mus))$.

Differentiating $w(\mus) = \beta_{\zidag}(\mus) \cdot \alpha_{\zidag}(\mus)$, we obtain
    \begin{align}
        \nabla_{\mus}(w) & = \nabla_{\mus}(\zidag) [\beta_{\zidag}(\mus) (1 - \alpha) (1 - \mus) + \nonumber
        \\ & \alpha_{\zidag}(\mus)(\betau - \betap)]  + \beta_{\zidag}(\mus) (1 - \alpha) (1 - \zidag(\mus)). \nonumber
    \end{align}
     Substituting $\nabla_{\mus}(\zidag)$ from \eqref{eq:del_zi_new} into the above equation yields
     \begin{align}
         \nabla&_{\mus}(w) = \Big[\beta_{\zidag}(\mus) (1 - \alpha) (1 - \mus) + \alpha_{\zidag}(\mus)(\betau - \betap)\Big] \nonumber
        \\ &\times \Big[-\gamma \nabla_{\mus}(w) \Big(\frac{\cp}{L(w(\mus) - \gamma)^2 (1 - \alpha) (\betau - \betap)} \nonumber
        \\ &+ \frac{\cu - \cp}{\gamma w(\mus) L_{\texttt{eq}}(\mus)}\Big)  - \frac{\cu - \cp}{(1 - \mus)L_{\texttt{eq}}(\mus)}\Big] \nonumber
        \\ &+ \beta_{\zidag}(\mus) (1 - \alpha) (1 - \zidag(\mus)). \nonumber
     \end{align}
By rearranging the terms, we can write $\nabla_{\mus}(w) = \frac{\mathcal{N}}{\mathcal{D}}$ where
\begin{align}
\mathcal{N} = &  -\Big[\beta_{\zidag}(\mus) (1 - \alpha) (1 - \mus) \nonumber
\\ & \quad + \alpha_{\zidag}(\mus)(\betau - \betap)\Big] \frac{\cu - \cp}{(1 - \mus) L_{\texttt{eq}}(\mus)} \nonumber
\\ & \quad + \beta_{\zidag}(\mus) (1 - \alpha) (1 - \zidag(\mus)), \nonumber
\\ \mathcal{D} = &  1 + \Big[\beta_{\zidag}(\mus) (1 - \alpha) (1 - \mus) +\alpha_{\zidag}(\mus)(\betau - \betap)\Big] \nonumber
\\& \times \gamma \Big(\frac{\cp}{L(w(\mus) - \gamma)^2 (1 - \alpha) (\betau - \betap)} \nonumber
\\ & \qquad + \frac{\cu - \cp}{\gamma w(\mus) L_{\texttt{eq}}(\mus)}\Big).
\label{eq:del_ystar}
\end{align}

Note that the denominator \eqref{eq:del_ystar} is a positive quantity. Therefore, for $w(\mus)$ to be a strictly decreasing function of $\mus$, we must have $\mathcal{N} < 0$, or equivalently
    \begin{align}
& \beta_{\zidag}(\mus) (1 - \alpha) \Big( 1 - \zidag(\mus) - \frac{\cu - \cp}{L_{\texttt{eq}}(\mus)} \Big) \nonumber
        \\ & \qquad < \alpha_{\zidag}(\mus) (\betau - \betap) \frac{\cu - \cp}{(1 - \mus) L_{\texttt{eq}}(\mus)}.
        \label{eq:del_y_less_0}
    \end{align}

We now show that \eqref{eq:del_y_less_0} is satisfied under Assumption \ref{assump:cp_more}. Since, $\cp > (1 - \alpha) L (\betau - \gamma)$ holds and $\yee(\mus) < 1 - \frac{\gamma}{\betau}$, the following holds true:
    \begin{align}
        &\frac{\cp}{(1 - \alpha) L \yee(\mus)} > \betau \nonumber
        \\ \implies & \frac{\cp w(\mus)}{(1 - \alpha) L (w(\mus) - \gamma)} > \betau \nonumber
        \\ \implies & \frac{\betau}{\betau - \betap} - \frac{\cp w(\mus)}{(1 - \alpha) L (w(\mus) - \gamma) (\betau - \betap)}  < 0 \nonumber
        \\ \implies & 1 + \frac{\betap}{\betau - \betap} - \frac{\cp w(\mus)}{ (1 - \alpha) L (w(\mus) - \gamma)  (\betau - \betap)} \nonumber
        \\ & \qquad \qquad + \frac{\cu - \cp}{L_{\texttt{eq}}(\mus)} < \frac{\cu - \cp}{L_{\texttt{eq}}(\mus)} . 
        \label{eq:sufficient_assump}
    \end{align}
On careful examination of \eqref{eq:zi_new} and \eqref{eq:sufficient_assump}, we find that the L.H.S. of \eqref{eq:sufficient_assump} is equal to $1 - \zidag(\mus)$, which implies $1 - \zidag(\mus) < \frac{\cu - \cp}{L_{\texttt{eq}}(\mus)}$. Substituting this inequality into \eqref{eq:del_y_less_0}, we obtain the L.H.S. of \eqref{eq:del_y_less_0} is a negative quantity, whereas, the R.H.S. is a positive quantity, 

As a result, $\nabla_{\mus}(w) < 0$, and consequently, $\nabla_{\mus}(\yee) < 0$. Thus, the endemic infection level $\yee(\mus)$ is strictly decreasing in $\mus$ within the interval $\mus \in (\mus^{\min}, \mus^{\max})$. \end{proof}

Now we state our main result which reveals the optimal static persuasion signal. 

\begin{theorem}\label{theorem:oss}
\rev{Suppose Assumptions \ref{assumption:main} and \ref{assump:cp_more} hold.} Then, the optimal static signal which leads to the smallest proportion of infected individuals at the SNE is $\mu^\star_{\texttt{S(stat)}} = \mus^{\max}$.
\end{theorem}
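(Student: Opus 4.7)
The plan is to partition the domain $\mus \in [0, 1]$ into the subintervals on which each SNE case of Theorem~\ref{lemma:sne_char} is active (under Assumptions~\ref{assumption:main} and \ref{assump:cp_more}), analyze the behavior of the SNE infected proportion $y^\star(\mus)$ on each subinterval, and stitch the pieces together by continuity at the interface points. Remark~\ref{remark:assump_cp_more} already rules out Cases~1 and 2, so in the generic regime $0 \le \mus^{\min} < \mus^{\max} \le 1$ only three SNEs persist: Case~5 on $[0, \mus^{\min})$, Case~4 on $(\mus^{\min}, \mus^{\max})$, and Case~3 on $[\mus^{\max}, 1]$.

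On each of these subintervals the monotonicity of $y^\star$ in $\mus$ is straightforward to extract. The Case~5 value $y^\star = 1 - \gamma/\betau$ is constant. On $(\mus^{\min}, \mus^{\max})$, Lemma~\ref{lemma:y_decreasing} directly yields that $y^\star$ is strictly decreasing in $\mus$. On $[\mus^{\max}, 1]$, substituting $\zsbar = 1$, $\zibar = 0$, $\mu_{\Itb} = 1$ into \eqref{eq:sis_protection} gives $\beta_{\mathtt{eff}} = \betap[\alpha + (1 - \alpha)\mus]$ and hence $y^\star(\mus) = 1 - \gamma/[\betap(\alpha + (1-\alpha)\mus)]$, which is strictly increasing in $\mus$.

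The next step is to glue the three pieces together by verifying continuity of $y^\star$ at the interface points. At $\mus = \mus^{\max}$, the defining equation \eqref{eq:mus_max} gives $g(0, \mus^{\max}) = 0$, so the Case~4 root satisfies $\zidag(\mus) \to 0$ as $\mus \to (\mus^{\max})^{-}$; plugging this limit into \eqref{eq:y_ee} recovers the Case~3 value at $\mus^{\max}$. A symmetric argument at $\mus^{\min}$ shows $\zidag(\mus) \to 1$, giving $\beta_{\mathtt{eff}} \to \betau$ and matching the Case~5 value $1 - \gamma/\betau$. Consequently $y^\star$ is continuous on $[0, 1]$, equals $1 - \gamma/\betau$ on $[0, \mus^{\min}]$, strictly decreases to a value strictly below $1 - \gamma/\betau$ on $(\mus^{\min}, \mus^{\max})$, and strictly increases on $[\mus^{\max}, 1]$. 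The unique minimizer is therefore $\mus = \mus^{\max}$, proving that $\mu^\star_{\texttt{S(stat)}} = \mus^{\max}$.

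The main obstacle I anticipate is the rigorous handling of the continuity at $\mus^{\min}$ and $\mus^{\max}$, together with the degenerate regimes. In particular, if $\mus^{\max} = 0$ then Case~3 covers all of $[0, 1]$ and the monotone-increasing branch is minimized trivially at $\mus = 0 = \mus^{\max}$; if $\mus^{\min} \ge \mus^{\max}$, then Case~4 is vacuous and one directly compares the constant Case~5 value $1 - \gamma/\betau$ with the monotonically increasing Case~3 branch initialized at $\mus^{\max}$, noting that the Case~3 branch attains a strictly smaller value at $\mus^{\max}$ because $\betap(\alpha + (1-\alpha)\mus^{\max}) < \betau$. In every regime the conclusion $\mu^\star_{\texttt{S(stat)}} = \mus^{\max}$ is reached by the same comparison.
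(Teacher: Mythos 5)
Your proposal is correct and follows essentially the same route as the paper's proof: rule out Cases 1--2 via Remark~\ref{remark:assump_cp_more}, dismiss the constant Case~5 branch as the worst, invoke Lemma~\ref{lemma:y_decreasing} for strict decrease on $(\mus^{\min},\mus^{\max})$, note strict increase of $y_{\texttt{EE}}(1,0;\mus)$ on $[\mus^{\max},1]$, establish continuity at $\mus^{\max}$ via $\zidag(\mus)\to 0$, and handle the corner case $\mus^{\max}=0$. The only additions (continuity at $\mus^{\min}$ and the degenerate regime $\mus^{\min}\ge\mus^{\max}$) are harmless extras not needed for the conclusion.
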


\begin{proof}
It follows from Remark \ref{remark:assump_cp_more} that the SNEs described in the first two cases of Theorem \ref{lemma:sne_char} do not exist under Assumption \ref{assump:cp_more}. We now focus on the remaining equilibria. 

First, we observe that $\mus < \mus^{\min}$ cannot be the optimal signal, since it results in the SNE $(1-\frac{\gamma}{\betau},1,1)$, which has the highest infection level compared to other SNEs; note that the entire population remains unprotected in this SNE. 

Now, it follows from Lemma \ref{lemma:y_decreasing} that when $\mus$ increases from $\mus^{\min}$ till $\mus^{\max}$, the infected fraction $y_{\texttt{EE}}(1, \zibar^\dagger; \mus)$ is monotonically decreasing in $\mus$. Thus, the range $\mus \in (\mus^{\min},\mus^{\max})$ does not contain the optimal signal. 

Finally, at the SNE $(y_{\texttt{EE}}(1, 0; \mus),1,0)$, the infection level $$y_{\texttt{EE}}(1, 0; \mus) = 1 - \frac{\gamma}{\betap (\alpha + (1 - \alpha) \mus)}$$ is strictly increasing in $\mus \in [\mus^{\max}, 1]$. In other words, the infected proportion at the SNE is monotonically decreasing in $\mus$ when $\mus \in (\mus^{\min},\mus^{\max})$ and monotonically increasing when $\mus \in [\mus^{\max}, 1]$. 

\rev{It remains to be shown that the infected proportion at the SNE is continuous at $\mus^{\max}$. Inspecting Cases $3$ and $4$ of Theorem \ref{lemma:sne_char}, it suffices to show that as $\mu_{\mathtt{S}}$ approaches $\mu^{\max}_{\mathtt{S}}$ from below, $z^{\dagger}_{\bar{\mathtt{I}}}(\mu_{\mathtt{S}})$ approaches $0$. To this end, note from \eqref{eq:yee} and \eqref{eq:h,g} that the functions $y_{\mathtt{EE}}(1,z_{\bar{\mathtt{I}}};\mu_{\mathtt{S}})$ and $g(z_{\bar{\mathtt{I}}},\mu_{\mathtt{S}})$ are continuous in both of their arguments. Note further that $y_{\mathtt{EE}}(1,z_{\bar{\mathtt{I}}};\mu_{\mathtt{S}})$ is monotonically increasing in both $z_{\bar{\mathtt{I}}}$ and $\mu_{\mathtt{S}}$ when the other variable is held constant. Consequently, when $h(z_{\bar{\mathtt{I}}},\mu_{\mathtt{S}}) < 0$, $g(z_{\bar{\mathtt{I}}},\mu_{\mathtt{S}})$ is monotonically increasing in both $z_{\bar{\mathtt{I}}}$ and $\mu_{\mathtt{S}}$ when the other variable is held constant. Now suppose $\mu_{\mathtt{S}}^{\max}>0$, we have $g(0,\mu^{\max}_{\mathtt{S}}) = 0$. Consequently, we have $h(0,\mu_{\mathtt{S}}^{\max}) < 0$. Let $\bar{\mu}_{\mathtt{S}} = \mu_{\mathtt{S}}^{\max} - \epsilon$ for sufficiently small $\epsilon>0$. Then, we have $h(0,\bar{\mu}_{\mathtt{S}}) < 0$. Since $z^{\dagger}_{\bar{\mathtt{I}}}(\bar{\mu}_{\mathtt{S}})$ satisfies $g(z^{\dagger}_{\bar{\mathtt{I}}}(\mu^{\max}_{\mathtt{S}}-\epsilon),\mu^{\max}_{\mathtt{S}}-\epsilon) = 0$, it follows from the continuity and monotonicity of $g$ that $z^{\dagger}_{\bar{\mathtt{I}}}(\mu^{\max}_{\mathtt{S}}-\epsilon)$ is strictly positive and converges to $0$ as $\epsilon$ converges to $0$ from above. Therefore, the infected proportion at the endemic equilibrium $y_{\mathtt{EE}}(1,{z^{\dagger}_{\bar{\mathtt{I}}}(\mu_{\mathtt{S}})};\mu_{\mathtt{S}})$ converges to $y_{\mathtt{EE}}(1,0;\mu^{\max}_{\mathtt{S}})$ when $\mu_{\mathtt{S}}$ approaches $\mu^{\max}_{\mathtt{S}}$ from below.

The above analysis for $\mus \in [\mus^{\max}, 1]$} also subsumes the corner case where $\mus^{\max} = 0$, under which the SNE is $(y_{\texttt{EE}}(1, 0; \mus),1,0)$ for the entire range of $\mus \in [0,1]$. In this case, $(y_{\texttt{EE}}(1, 0; \mus),1,0)$ is monotonically increasing in $\mus$ for $\mus \in [0,1]$. Thus, $\mus^{\max}$ is the optimal static signal.
\end{proof}

\rev{The above result can be interpreted as the Stackelberg equilibrium where the principal acts as the leader who aims to minimize the infected proportion, and the agents act as followers who respond to the signal sent by the leader by playing the SNE strategy profile.} In Section \ref{sec:num_res}, we give two numerical examples to verify the above theorem. First, when Assumption \ref{assump:cp_more} is satisfied, we show that the infected proportion at the SNE is smallest when $\mu = \mus^{\max}$. We then show that when Assumption \ref{assump:cp_more} is violated, then $\mus^{\max}$ is not the optimal signal. 

\section{Optimal Dynamic Signal}
\label{sec:ods}

We now introduce the \emph{dynamic signaling scheme} where the sender is allowed to vary or modulate the persuasive signal over time depending on the states. To this end, we formulate a finite-horizon optimal control problem where the goal is to minimize the integral of the infected proportion $y(t)$ subject to the epidemic dynamics \eqref{eq:sis_protection} and evolutionary learning dynamics adopted by the individuals to update their strategies $\zsbar,\zibar$. Specifically, we adopt the Smith dynamics \cite{smith1984stability} to capture the evolution of the strategies since the stationary points of the Smith dynamics corresponds to the Nash equilibrium of the game, and vice versa \cite{sandholm2015population}. 

However, the Smith dynamics is not smooth due to the presence of a $\max$ operator. Consequently, we relax this operator with a soft-max function with parameter $\sigma \in [0, \infty)$; when $\sigma$ increases, the relaxed dynamics closely approximates the Smith dynamics. We now formally state the finite-time horizon optimal control problem as
\begin{align}
    \min_{\mus} &\int_{0}^{T} y(t) \,dt \nonumber \\ 
    \textrm{s.t.} \;\;  & \dot{y} = ((1 - y) [\betap + (\betau - \betap) \zibar] \nonumber
    \\ & \quad \times[\alpha + (1 - \alpha) (\zsbar \mus + \zibar (1 - \mus))] - \gamma) y, \nonumber 
    \\  & \dot{z}_{\bar{\Stb}} = \frac{1 - \zsbar}{1 + e^{(\sigma \Delta U[\bar{\Stb}])}}  - \frac{\zsbar}{1 + e^{(-\sigma \Delta U[\bar{\Stb}])}}, \nonumber 
     \\ & \dot{z}_{\bar{\Itb}} = \frac{1 - \zibar}{1 + e^{(\sigma \Delta U[\bar{\Itb}])}}  - \frac{\zibar}{1 + e^{(-\sigma \Delta U[\bar{\Itb}])}}, \nonumber
    \\ & 0 \leq \mus \leq 1,
    \label{eq:ocp}
\end{align}
where $e^{(\cdot)}$ is the exponential function and $\Delta U[\bar{\Stb}]$ and $\Delta U[\bar{\Itb}]$ are defined in \eqref{eq:utility}. We denote the optimal dynamic signal by $\mu^\star_{\texttt{S(dyn)}}(t)$ and suppress the dependence on $t$ for better readability.


\section{Numerical Results}
\label{sec:num_res}

We now compare the state trajectory under the static and dynamic signaling schemes via numerical simulations. The parameters whose values remain unaltered throughout this section are given in the following table.

\begin{center}
\resizebox{6cm}{!}{
\begin{tabular}{|c | c | c | c | c | c | c |} 
 \hline
 $\alpha$ & $\gamma$ & $L$ & $y(0)$ & $\zsbar(0)$ & $\zibar(0)$ & $\sigma$\\ [0.5ex] 
 \hline
 0.45 & 0.2 & 80 & 0.01 & 0.5 & 0.5 & 20\\ \hline
\end{tabular}
}
\end{center}

\rev{We choose the value of $L$ to be large enough to capture the adverse health and financial impacts of infection on an individual. We choose $\sigma = 20$, under which the dynamics of $z_{\bar{\Stb}}$ and $z_{\bar{\Itb}}$ closely approximate the Smith dynamics. In \cite{cheng2021face}, the authors showed that an ideal surgical mask reduces COVID-19 infection risk by approximately $70\%$. In order to model partial effectiveness, we choose $\alpha = 0.45$ which corresponds to $55\%$ reduction.}

\begin{figure}[ht!]
\centering
  \subfigure{\includegraphics[width=40mm]{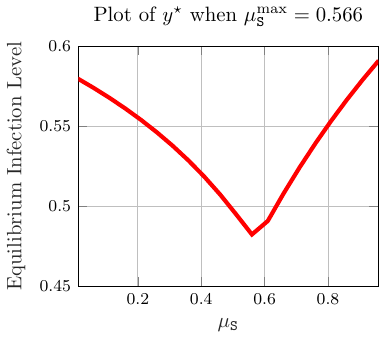}}
   \hspace{3mm}
  \subfigure{\includegraphics[width=40mm]{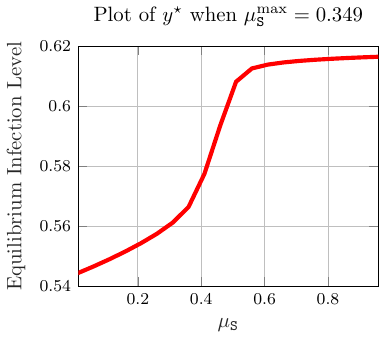}}
  \caption{Infected proportion at the SNE with respect to $\mus$ when Assumption \ref{assump:cp_more} is satisfied (left); and not satisfied (right).}
  \label{fig:y_vs_mus}
\end{figure}

\begin{figure*}[h!]
\centering
  \subfigure{\includegraphics[width=55mm]{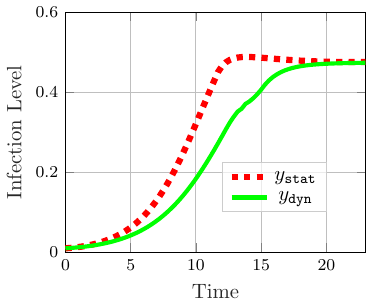}}
  \hspace{3mm}
  \subfigure{\includegraphics[width=56mm]{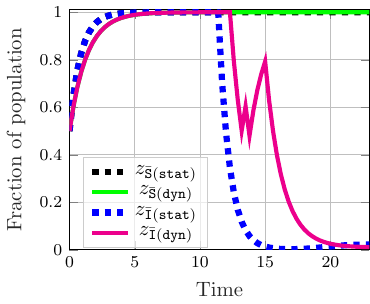}}
  \hspace{3mm}
  \subfigure{\includegraphics[width=56mm]{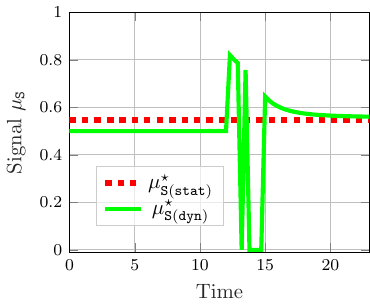}}
  \caption{Evolution of proportion of infected individuals (left), the proportions of individuals who remain unprotected (middle), and the signal $\mus$ (right) under both the static (dashed) and dynamic (solid) signaling schemes.}
  \label{fig:stat_vs_dyn}
\end{figure*}

\subsection{Static Signaling}\label{section:num_static}
Here, we vary $\mus$ from $\mus = 0.01$ to $\mus = 0.96$, and at each value of $\mus$, we obtain the infected proportion at the SNE. First, we choose the following parameter values: $\betau = 0.65$, $\betap = 0.5$, $\cp = 25$, $\cu = 32$, which satisfy Assumption \ref{assump:cp_more}. For these parameters, we have $\mus^{\min} = -2.028$ and $\mus^{\max} = 0.566$. The left panel of Figure \ref{fig:y_vs_mus} shows the infection level at the SNE with respect to $\mus$ under the above parameters. When $\mus \in [0.01, 0.566)$, the SNE is given by $(\yee(1, \zidag; \mus), 1, \zidag)$ following Theorem \ref{lemma:sne_char}. We observe that as $\mus$ is increases from $0.01$ to $0.566$, the equilibrium infection level $y^\star$ decreases. As $\mus$ grows beyond $\mus = \mus^{\max} = 0.566$, the SNE has the form $(\yee(1, 0; \mus), 1, 0)$, with $y^\star$ strictly increasing in $\mus$. Thus, under Assumption \ref{assump:cp_more}, the infected proportion at the SNE is smallest when $\mus = \mus^{\max}$. 

We now consider parameter values which violate Assumption \ref{assump:cp_more}. We choose $\betau = 0.9$, $\betap = 0.7$, $\cp = 19$, $\cu = 20$ under which $\mus^{\max} = 0.349$. The right panel of Figure \ref{fig:y_vs_mus} shows that the infected proportion at the SNE is not minimized at $\mus = \mus^{\max}$. In fact, the infected proportion is monotonically increasing with $\mus$.

\subsection{Dynamic Signaling}
We now illustrate the effectiveness of the optimal dynamic signaling scheme compared to the optimal static signaling scheme. We set the time horizon $T = 23$ and parameters $\betau = 0.65$, $\betap = 0.5$, $\cp = 20$, and $\cu = 25$ in accordance with Assumption \ref{assump:cp_more}. \rev{For the above parameters, the reproduction number is $2.5$ under protection and $3.25$ when protection is not adopted. These values fall within the empirically estimated range of the reproduction number for COVID-19 \cite[Figure 1]{he2020estimation}. }

For the above set of parameters, we obtain  $\mu^\star_{\texttt{S(stat)}} = \mus^{\max} = 0.548$, and determine the corresponding SNE following Theorem \ref{lemma:sne_char}. We then solve the optimal control problem \eqref{eq:ocp} via the numerical solver Quasi-Interpolation based Trajectory Optimization (QuITO) \cite{ganguly2023quito} which uses a direct multiple shooting (DMS) technique. 

We plot $y(t), \zsbar(t), \zibar(t)$, and the optimal control input under both static and dynamic signaling in Figure \ref{fig:stat_vs_dyn}. The dashed (respectively, solid) lines represent the quantities corresponding to the static (respectively, dynamic) signaling scheme. The rightmost plot in Figure \ref{fig:stat_vs_dyn} shows the optimal signal chosen by the sender under both static and dynamic schemes. While the signal is time-varying under the dynamic scheme, it converges to the optimal static signal $\mus^{\max}$ towards the end of the simulation. The middle panel of Figure \ref{fig:stat_vs_dyn} represents the evolution of strategies $\zsbar$ and $\zibar$, whereas the left panel plot shows the trajectory of the infected proportion under the static and dynamic signaling schemes. As expected, the infected proportion under the dynamic signaling scheme is smaller than the infected proportion under the static signaling scheme throughout the simulation; indeed the former aims to minimize the infection level along the entire trajectory while the latter is only concerned with the SNE. 

\rev{\begin{remark}
While we discuss numerical solutions of \eqref{eq:ocp} in this section, the nonlinearity in the dynamics of $y, \zsbar$, and $\zibar$ (particularly in the denominators in the dynamics of $\zsbar$ and $\zibar$), renders the analysis of the Lagrangian function extremely challenging. Thus, deriving closed-form analytical solutions of the optimal dynamic signal is beyond the scope of this paper, and remains a challenging open problem.  
\end{remark}}


\subsection{Dynamic Signaling under Modified Cost Function}

Note that the value of $\mus^{\max}$ can potentially range from $0$ to $1$. However, when $\mus$ is close to or smaller than $0.5$, the credibility of the signaling scheme becomes questionable, and the individuals may eventually learn not to trust such a scheme. In order to mitigate this shortcoming, we modify the stage cost of the dynamic signaling to
$$J(t) = y(t) + c (1 - \mus)^2,$$
which encourages the sender to choose $\mus$ closer to $1$. The weight $c \in \mathbb{R}_{\geq 0}$ controls the trade-off between minimizing the infected proportion and keeping $\mus$ close to $1$.

\begin{figure}[h!]
\centering
  \subfigure{\includegraphics[width=40mm]{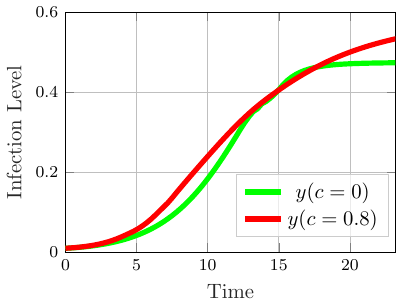}}
   \hspace{3mm}
  \subfigure{\includegraphics[width=40mm]{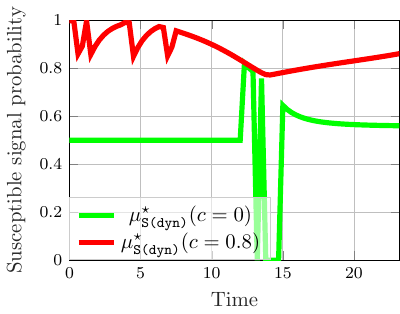}}
  \caption{Comparison of the infection levels and dynamic control signals for the original and modified stage cost.}
  \label{fig:comp_cost}
\end{figure}

Figure \ref{fig:comp_cost} compares the infected proportion and dynamic signal when $c = 0$ and $c=0.8$. As expected, the optimal input signal $\mu^\star_{\texttt{S(dyn)}}$ is closer to $1$ when $c =0.8$. However, the infected proportion is slightly larger under the modified cost compared to when $c = 0$.

Note that the trajectory of $\mu^\star_{\texttt{S(dyn)}}(t)$ shows a few sharp peaks and crests on the right panel of Figure \ref{fig:stat_vs_dyn} when $t\in[10,15]$. Similarly, the right panel of Figure \ref{fig:comp_cost} shows oscillatory behavior of $\mu^\star_{\texttt{S(dyn)}}(t)$ for $t \leq 7$. These are potentially due to the structure of the optimal control problem. In particular, if we minimize the corresponding Lagrangian functions, we obtain a quadratic (cubic) function of $\mus$ when the stage cost is $y(t)$ ($y(t) + c (1 - \mus)^2$), with possibly multiple local minima, which potentially gives rise to the fluctuations in the optimal control input.

\subsection{Relaxing the Assumption $\mu_{\mathtt{I}} = 1$}

\rev{Finally, we examine the consequence of relaxing the assumption $\mu_{\mathtt{I}} = 1$. To this end, we discretize both $\mus$ and $\mu_{\mathtt{I}}$ in steps of $0.005$ over the interval of $[0.01, 1]$. For each $\mu_{\mathtt{I}}$, we compute the infected proportions at the SNE over all possible values of $\mus$, denoted by $y_\mathtt{EE}(\mus,\mu_{\mathtt{I}})$ with a slight abuse of notation. On the left panel of Figure \ref{fig:vs_mui}, we plot the value of $\mus$ that results in the smallest infected proportion at the SNE, defined as $\mus^{\mathtt{opt}}(\mu_{\mathtt{I}}):=\argmin_{\mus} y_\mathtt{EE}(\mus,\mu_{\mathtt{I}})$, with respect to $\mu_{\mathtt{I}}$. The corresponding infected proportion ($\min_{\mus} y_\mathtt{EE}(\mus,\mu_{\mathtt{I}})$) is shown on the right panel for three combinations of $\betap$ and $\cp$ satisfying Assumptions \ref{assumption:main} and \ref{assump:cp_more}. All other parameter values are kept identical to those in Section \ref{section:num_static} that satisfy Assumption \ref{assump:cp_more}.

\begin{figure}[ht!]
\centering
  \subfigure{\includegraphics[width=38.5mm]{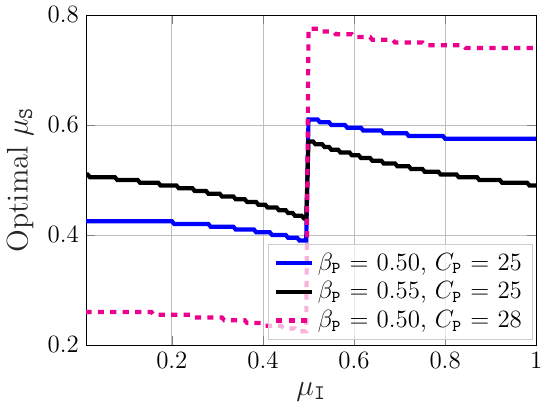}}
   \hspace{3mm}
  \subfigure{\includegraphics[width=41.5mm]{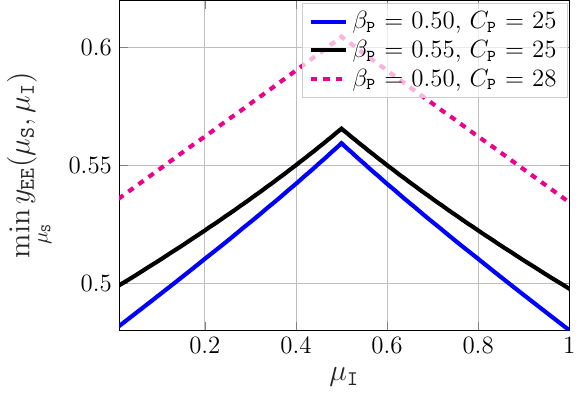}}
  \caption{\rev{Variation of $\mus^{\mathtt{opt}}(\mu_{\mathtt{I}})$ (left), and the infected proportion at the SNE under $(\mus^{\mathtt{opt}}(\mu_{\mathtt{I}}),\mu_{\mathtt{I}})$ (right) with respect to $\mu_{\mathtt{I}}$.}}
  \label{fig:vs_mui}
\end{figure}


As expected, the optimal $\mus$ at $\mu_{\mathtt{I}} = 1$ coincides with the corresponding $\mus^{\max}$. Now observe that as $\mu_{\mathtt{I}}$ decreases below $1$, the value of $\mus$ that achieves the minimum infection level, $\mus^{\mathtt{opt}}(\mu_{\mathtt{I}})$, does not change in a significant manner compared to $\mus^{\max}$. Thus, the optimal signal $\mus$ is somewhat robust with respect to $\mu_{\mathtt{I}}$. However, the magnitude of the minimum infection level ($\min_{\mus} y_\mathtt{EE}(\mus,\mu_{\mathtt{I}})$) increases as $\mu_{\mathtt{I}}$ decreases from $1$ to $0.5$. Furthermore, the plots exhibit symmetry around $\mu_{\mathtt{I}} = 0.5$, i.e., the signal combinations $(\mus^{\mathtt{opt}}, \mu_{\mathtt{I}})$ and $(1 - \mus^{\mathtt{opt}}, 1 - \mu_{\mathtt{I}})$ lead to the same infected proportion at the equilibrium. This symmetry arises because the posterior probabilities satisfy
$$\pi^{+}[\mathtt{s} \mid \bar{\Stb}]\big(\mus, \mu_{\mathtt{I}}\big) = \pi^{+}[\mathtt{s} \mid \bar{\Itb}]\big(1 - \mus, 1 - \mu_{\mathtt{I}} \big),$$ 
for $\mathtt{s} \in \{\Stb, \Itb\}$. Finally, we note that the smallest value of $y_\mathtt{EE}(\mus^{\mathtt{opt}}(\mu_{\mathtt{I}}),\mu_{\mathtt{I}})$ is obtained at $\mu_{\mathtt{I}}=1$. Establishing the optimality of $\mu_{\mathtt{I}}=1$ remains an important direction for future research.}  

\section{Conclusions}\label{sec:conc}
In this work, we investigated SIS epidemic containment via Bayesian persuasion of a large population of agents who strategically adopt a partially effective protection measure. We first derived the optimal static signal which minimizes the infected proportion at the stationary Nash equilibrium. We then formulated a finite-horizon optimal control problem to determine the optimal dynamic signaling scheme to minimize the infected proportion along the solution trajectory. Simulation results show that the dynamic signaling scheme is more effective in containing the infection prevalence over the entire trajectory. This work can be extended along several directions, including to networked epidemic models, and to settings where the parameters of the epidemic dynamics and cost functions of the players are not known to the sender. 

\section*{Acknowledgement}
The authors thank Prof. Ankur Kulkarni, Siddhartha Ganguly, and Abhisek Satapathi for helpful discussions.  

\bibliographystyle{IEEEtran}
\bibliography{main}

\end{document}